\def\identity{\leavevmode\hbox{\small1\kern-3.8pt\normalsize1}}
\newcommand{\comment}[1]{}
\newcommand{\ket}[1]{\left |  #1 \right\rangle}
\newcommand{\ketbra}[2]{|#1\rangle\!\langle#2|}
\theoremstyle{plain}
\newtheorem{theorem}{Theorem}
\newtheorem{lemma}{Lemma}
\newtheorem{corollary}{Corollary}
\theoremstyle{definition}
\newtheorem*{SC}{Security Condition}
\begin{document}

\title{The Impossibility Of Secure Two-Party Classical Computation}
\date[]{August 21, 2007}

\author{Roger \surname{Colbeck}}
\email[]{r.a.colbeck@damtp.cam.ac.uk}
\affiliation{Centre for Quantum Computation,
             DAMTP,
	     Centre for Mathematical Sciences,
             University of Cambridge,
             Wilberforce Road,
             Cambridge CB3 0WA, UK}
\affiliation{Homerton College,
             Hills Road,
             Cambridge CB2 8PH, UK}

\begin{abstract}
We present attacks that show that unconditionally secure two-party
classical computation is impossible for many classes of function.  Our
analysis applies to both quantum and relativistic protocols.  We
illustrate our results by showing the impossibility of oblivious
transfer.
\end{abstract}

\maketitle

\section{Introduction}
Consider two parties wishing to compute some joint function of their
data (two millionaires might wish to know who is richer, for example).
A secure computation of such a function is one for which the only
information the first party gets on the input of the second is that
implied by the outcome of the computation, and vice versa.

\comment{In a trusted-third-party (\sc{TTP}) computation model, both
parties give their input to the \sc{TTP} who evaluates the function
then returns to each party the relevant output.}

In this work, we focus on {\it unconditional} security, whereby we
seek to construct a protocol whereby the two mistrustful parties can
communicate in order to achieve the task.  Security will rely on a
belief in the laws of physics.  We allow each party to exploit the
properties of both quantum mechanics and relativity in order to
achieve security.  While the security benefits of the former are well
known, relatively little investigation has been made into the extra
security afforded by the latter.  One positive result in relativistic
cryptography is that it allows variable-bias coin tossing to be
realized \cite{CK1}.  In this paper, we show that even using both
relativistic and quantum protocols, there are a large class of
functions for which secure two-party computation is impossible.  A
discussion of relativistic cryptography can be found in Refs.\
\cite{Kent_relBC,CK1}.

We call a computation classical, in spite of it potentially relying on
quantum communication for its implementation, because its inputs and
outputs are classical data.

\begin{table}
\begin{tabular}{|l|l|c|c|l|}
\hline
Zero-input&Deterministic &$ \checkmark $  &Trivial\\
        &Random       one-sided         &$ \checkmark $  &Trivial\\
        &Random       two-sided         &\checkmark   &Biased
$n$-faced die roll (see \cite{CK1} for discussion)\\
\hline
One-input &Deterministic &$\checkmark $  &Trivial\\
        &Random       one-sided         &\ding{55}$^*$&One-sided
variable-bias $n$-faced die roll (this paper)\\
        &Random       two-sided         &$\checkmark^*$ &Variable-bias
$n$-faced die roll cf.\ \cite{CK1}\\
\hline
Two-input &Deterministic one-sided     &\ding{55}     &cf.\ \cite{Lo}\\
          &Deterministic two-sided     &\ding{55}$^*$ &This paper\\
          &Random       one-sided      &\ding{55}$^*$ &This paper\\
          &Random       two-sided      &\ding{55}$^*$ &This paper\\
\hline
\end{tabular}
\caption{Functions computable with unconditional security in two-party
  computations using (potentially) both quantum and relativistic
  protocols.  \checkmark\ indicates that all functions of this type
  are possible, \ding{55} indicates that all functions of this type
  are impossible, $\checkmark^*$ indicates that some functions of this
  type are possible and all functions of this type are conjectured to
  be possible,
  and \ding{55}$^*$ indicates that some functions of this type are
  impossible.}
\label{fns}
\end{table}

Two-party computations can be divided into several classes, depending
on the number of parties that receive the output (the {\it sidedness}
of the function) and whether the function is deterministic or random.
In the two-sided case, we will further specialize to {\it single
function} computations, where both parties receive identical outcomes.
What is presently known about such functions is summarized in Table
\ref{fns}.  For a longer introduction to secure two-party computation,
see Ref.\ \cite{CK1}.

In this paper, we will show the impossibility of various secure
two-party computations, by giving an explicit cheating attack.  A
summary of the argument is as follows.  In a classical computation,
each party is supposed to input one of a finite set of classical
values.  However, the impossibility of classical certification
\cite{Kent_certif} means that one party cannot detect when the other
inputs a superposition of such inputs.  By keeping all decisions at
the quantum level until the end of the protocol, we can model the
entire computation as unitary.  The insecurity of the computation then
follows because there exists a measurement on the output state
generated by the superposed input, which allows the cheating party to
better distinguish between the possible inputs of the other party than
if they had been honest.  In most cases, we have impossibility proofs
for the simplest non-trivial cases of each class of function.  We
discuss at the end of the paper the possible generalizations.

In this paper we consider {\it perfectly secure} protocols---i.e.\
those for which the probability of cheating is strictly zero.
Further, our protocols are {\it perfectly correct}; that is, the
probability of error is strictly zero in the case where both parties
are honest.  One would like to extend our results to cover the case
of protocols for which the probability of cheating and of error tend
to zero in the limit that some security parameter tends to infinity.

\section{Computational Model}
We use a black box model for secure computation.  A black box
represents an idealized version of a protocol.  It can be thought of
as an unbreakable box which has an input and output port for each
party.  It features an authentication system (e.g., an unalterable
label) so that each party can be sure of the function it computes.  An
appropriately constructed protocol will prescribe a sequence of
information exchanges mimicking the essential features of such a black
box.  If one of the parties deviates from the prescribed exchanges,
the protocol should abort.  The question of whether or not it is
possible to construct a protocol mimicking a given black box will not
be addressed\footnote{However, we do eliminate certain types of black
  box, e.g.\ ones that allow classical certification (see later).}.
Rather, we show that cheating is possible even if such black boxes do
exist.

Since in any real protocol all measurements can be delayed until the
end, we consider only black boxes which perform unitary operations.
The outcomes of such unitary operations are distributed amongst the
parties.  At the end of a classical computation they are measured to
generate the outcome.  For a general two-sided function, we consider
the unitary, $U_f$, such that
\begin{eqnarray}
\label{gc1}
U_f\ket{i}_A\ket{j}_B\ket{0}\ket{0}=\ket{i}_A\ket{j}_B\sum_k\alpha^k_{i,j}\ket{kk}_{AB},
\end{eqnarray}
where $\{\alpha^k_{i,j}\}$ depend on the function being computed, and
the index $k$ runs over all possible outputs.  $i$ and $j$ correspond
to Alice's and Bob's inputs respectively, and their output\footnote{Recall
that we have restricted to single function computations.} is $k$
which is read by measurement in an orthonormal basis.  Outcome $k$
occurs with probability $|\alpha^k_{i,j}|^2$.  If the function is
deterministic, then, for each $i$ and each $j$, $|\alpha_{i,j}^k|=1$
for one value of $k$, and is zero for all others.  More generally, the
unitary, $U'_f$ performing
\begin{eqnarray}
\label{gc2}
U'_f\ket{i}_A\ket{j}_B\ket{0}\ket{0}\ket{0}=\ket{i}_A\ket{j}_B\sum_k\alpha^k_{i,j}\ket{kk}_{AB}\ket{\psi_{i,j}^k}_{AB},
\end{eqnarray}
would be of use to compute such a function, where the final Hilbert
space corresponds to an ancillary system the black box uses for the
computation (and has arbitrary dimension).  In the protocol mimicking
such a box, this final state must be distributed between Alice and Bob
in some way, such that the part that goes to Bob, for instance,
contains no information on Alice's input.

If black boxes implementing such unitaries were to exist, then each
party has two ways of cheating.  The first is by inputting a
superposition of states into the protocol, rather than a member of the
computational basis as they should.  The second involves using a
different measurement on the output of the black box than that
dictated by the protocol.  It follows from the impossibility of
classical certification \cite{Kent_certif} that a real protocol cannot
prevent the first attack.  Under these attacks, insecurity of
functions under $U_f$ implies insecurity under $U'_f$, as we show
below.  Hence it is sufficient to consider only the former.

Consider the case where Alice makes a superposed input,
$\sum_ia_i\ket{i}$, rather than a single member of the computational
basis.  Then, at the end of the protocol, her reduced density matrix
takes either the form 
\begin{equation}
\label{sigj}
\sigma_j=\sum_{i,i',k}a_ia_{i'}^*\alpha_{i,j}^k(\alpha_{i',j}^k)^*\ketbra{i}{i'}\otimes\ketbra{k}{k}
\end{equation}
or
\begin{equation}
\sigma'_j=\sum_{i,i',k}a_ia_{i'}^*\alpha_{i,j}^k(\alpha_{i',j}^k)^*\ketbra{i}{i'}\otimes\ketbra{k}{k}\otimes{\text{tr}}_B\ketbra{\psi_{i,j}^k}{\psi_{i',j}^k},
\end{equation}
where the first case applies to $U_f$, and the second to $U'_f$.

Alice is then to make a measurement on her state in order to
distinguish between the different possible inputs Bob could have made,
as best she could.  We will show that there exists a trace-preserving
quantum operation that Alice can use to convert $\sigma'_j$ to
$\sigma_j$ for all $j$.  It follows that Alice's ability to distinguish
between $\{\sigma'_j\}_j$ is at least as good as her ability to
distinguish between $\{\sigma_j\}_j$.

In order that the protocol functions correctly when both Alice and Bob
are honest, we require
$\text{tr}_B\ketbra{\psi_{i,j}^k}{\psi_{i,j}^k}\equiv\rho^{i,k}$ to be
conditionally independent of $j$ given $k$ (otherwise Alice can gain
more information on Bob's input than that implied by $k$ by a suitable
measurement on her part of this state).  By expressing $\rho^{i,k}$ in
its diagonal basis,
$\rho^{i,k}=\sum_m\lambda^{i,k}_mU_A^{i,k}\ketbra{m}{m}_A(U_A^{i,k})^{\dagger}$,
we have
\begin{equation}
\ket{\psi_{i,j}^k}=\sum_m\sqrt{\lambda_m^{i,k}}U_A^{i,k}\ket{m}_A\otimes U_B^{i,j,k}\ket{m}_B,
\end{equation}
where $\{\ket{m}_A\}_m$ form an orthogonal basis set on Alice's system
and likewise $\{\ket{m}_B\}_m$ is an orthogonal basis for Bob's system.
Bob then holds
\begin{equation}
\text{tr}_A\ketbra{\psi_{i,j}^k}{\psi_{i,j}^k}=\sum_m\lambda_m^{i,k}U_B^{i,j,k}\ketbra{m}{m}_B(U_B^{i,j,k})^{\dagger}.
\end{equation}
This must be conditionally independent of $i$ given $k$, hence so must
$\lambda_m^{i,k}$ and $U_B^{i,j,k}$.  Thus
\begin{equation}
\ket{\psi_{i,j}^k}=\sum_m\sqrt{\lambda_m^k}(U_A^{i,k}\otimes
U_B^{j,k})\ket{m}_A\ket{m}_B.
\end{equation}
It hence follows that there is a unitary on Alice's system converting
$\ket{\psi_{i_1,j}^k}$ to $\ket{\psi_{i_2,j}^k}$ for all $i_1$, $i_2$,
and that, furthermore, this unitary is conditionally independent of
$j$ given $k$.  Likewise, there is a unitary on Bob's system
converting $\ket{\psi_{i,j_1}^k}$ to $\ket{\psi_{i,j_2}^k}$ for all
$j_1$, $j_2$, with this unitary being conditionally independent of $i$
given $k$.

Returning now to the case where Alice makes a superposed input.  The
final state of the entire system can be written
\begin{equation}
\sum_{i,k}a_i\alpha_{i,j}^k\ket{i}_A\ket{j}_B\ket{k}_A\ket{k}_B(U_A^{i,k}\ket{m}_A)(U_B^{j,k}\ket{m}_B).
\end{equation}
Alice can then apply the unitary
\begin{equation}
V=\sum_{i,k}\ketbra{i}{i}_A\otimes\openone_B\otimes\ketbra{k}{k}_A\otimes\openone_B\otimes
(U_A^{i,k})^{\dagger}\otimes\openone_B
\end{equation}
to her systems leaving the state as
\begin{equation}
\sum_{i,k}a_i\alpha_{i,j}^k\ket{i}_A\ket{j}_B\ket{k}_A\ket{k}_B\sum_m\sqrt{\lambda_m^k}\ket{m}_A(U_B^{j,k}\ket{m}_B).
\end{equation}
Alice is thus in possession of density matrix
\begin{equation}
\sum_{i,i',k}a_ia_{i'}^*\alpha_{i,j}^k(\alpha_{i',j}^k)^*\ketbra{i}{i'}\otimes\ketbra{k}{k}\otimes\rho_A^k,
\end{equation}
where $\rho_A^k=\sum_m\lambda_m^k\ketbra{m}{m}_A$.  On tracing out the
final system, we are left with $\sigma_j$ as defined by (\ref{sigj}).

We have hence shown that there is a trace-preserving quantum operation
Alice can perform which converts $\sigma'_j$ to $\sigma_j$ for all
$j$, and that this operation is conditionally independent of $j$ given
$k$.  Hence Alice's ability to distinguish between Bob's inputs after
computations of the type $U'_f$ is at least as good as her ability to
distinguish Bob's inputs after computations of the type $U_f$, and so,
under the type of attack we consider, insecurity of computations
specified by $U_f$ implies insecurity of those specified by $U'_f$.
We will therefore consider only type $U_f$ in our analysis.  An
analogous argument follows for the one-sided case, and likewise for
the deterministic cases (which are special cases of the
non-deterministic ones).

\bigskip
We now state the security condition that will be shown to be breakable
for a large class of computation.
\begin{SC}
  Consider the case where Bob is honest.  For a computation to be
  considered secure, there can be no input, together with a
  measurement on the corresponding output that gives Alice a better
  probability of guessing Bob's input than she would have gained by
  following the protocol honestly and making her most informative
  input.  This condition must hold for all forms of prior information
  Alice holds on Bob's input.
\end{SC}

Let us emphasize that the use of the black box model does not restrict
the scope of our proofs: these apply to all real protocols.  The model
is common to discussions of universal composability (see Section
\ref{disc}) and makes manifest that is sufficient for parties to behave
dishonestly only in the initial and final steps of any protocol in
order to break our security condition\footnote{In any case, if a
  protocol mimicks a black box correctly, then there is no scope for
  cheating during its implementation.}.

\section{Deterministic Functions}
We first focus on the deterministic case.  Lo showed that two-input
deterministic one-sided computations are impossible to compute
securely \cite{Lo}, hence only two-sided deterministic functions
remain\footnote{Lo did not consider relativistic cryptography, but his
results apply to this case as well \cite{CK1}.}.  There is a further
consideration when discussing deterministic functions that leads us to
restrict the class of functions further.

Suppose that the outcome of such a protocol leads to some real-world
consequence.  In the dating problem \cite{GottesmanLo}, for example,
one requires a secure computation of $k=i\times j$, where
$i,j\in\{0,1\}$.  If the computation returns $k=1$, then the protocol
dictates that Alice and Bob go on a date.  This additional real-world
consequence is impossible to enforce, although both Alice and Bob have
some incentive not to stand the other up, since this results in a loss
of the other's trust.  A cost function could be introduced to quantify
this, but since suitable cost assignments must be assessed case by
case, it is difficult to develop general results.  To eliminate such
an issue, we restrict to the case where the sole purpose of the
computation is to learn something about the input of the other party.
No subsequent action of either party based on this information will be
specified.

We say that a function is {\it potentially concealing} if there is no
input by Alice which will reveal Bob's input with certainty, and
vice versa.  If the aim of the computation is only to learn something
about the input of the other party, and if Bob's data is truly
private, he will not enter a secure computation with Alice if she can
learn his input with certainty.  We hence only consider potentially
concealing functions in what follows.  In addition, we will ignore
{\it degenerate} functions in which two different inputs are
indistinguishable in terms of the outcomes they afford.  If the sole
purpose of the computation is to learn something about the other
party's input, then, rather than compute a degenerate function, Alice
and Bob could instead compute the simpler function formed by combining
the degenerate inputs of the original.

An alternative way of thinking about such functions is that they
correspond to those in which there is no cost for ignoring the real
world consequence implied by the computation.  At the other extreme,
one could invoke the presence of an enforcer who would compel each
party to go ahead with the computation's specified action.  This would
have no effect on security for a given function (a cheating attack
that works without an enforcer also works with one) but introduces a
larger set of functions that one might wish to compute.  There exist
functions within this larger set for which the attack we present does
not work.

We specify functions by giving the matrix of outcomes.  For
convenience the outputs of the function are labelled with consecutive
integers starting with 0.  We consider functions that satisfy the
following conditions:
\begin{enumerate}
\item \label{cond1} (Potentially concealing requirement) Each row and
  each column must contain at least two elements that are the same.
\item \label{cond2} (Non degeneracy requirement) No two rows or
columns should be the same.
\end{enumerate}
For instance, if $i,j\in\{0,1,2\}$ (which we term a $3\times 3$
function), the function $f(i,j)=1-\delta_{ij}$ is
\begin{center}
\begin{tabular}{cc|ccc|}
\multirow{2}{*}{$f(i,j)$}&&\multicolumn{3}{|c|}{$i$}\\
&\;&\;0\;&\;1\;&\;2\;\\
\hline
\multirow{3}{*}{$j$}&0\;&$0$&$1$&$1$\\
&1\;&$1$&$0$&$1$\\
&2\;&$1$&$1$&$0$\\
\hline
\end{tabular}
.
\end{center}
\smallskip This function is potentially concealing, and non-degenerate.

We consider the case of $3\times 3$ functions.  We first give a
non-constructive proof that Alice can always cheat, and then an
explicit cheating strategy.

Let us assume that we have a black box that can implement the
protocol, i.e., that performs the following operation:
\begin{eqnarray}
U_f\ket{i}_A\ket{j}_B\ket{0}\ket{0}=\ket{i}_A\ket{j}_B\ket{f(i,j)}_A\ket{f(i,j)}_B.
\end{eqnarray}
The states $\{\ket{i}_A\}$ are mutually orthogonal, as are the members
of the sets $\{\ket{j}_B\}$, $\{\ket{f(i,j)}_A\}$ and
$\{\ket{f(i,j)}_B\}$.  This ensures that Alice and Bob always obtain
the correct output if both have been honest.  The existence of such a
black box would allow Alice to cheat in the following way.  She can
first input a superposition, $\sum_{i=0}^2 a_i\ket{i}_A$ in place
of $\ket{i}_A$.  Her output from the box is one of
$\rho_0,\rho_1,\rho_2$, the subscript corresponding to Bob's input,
$j$, where (using the shorthand ${\rm tr}_B(\ket{\Psi})\equiv{\rm
tr}_B(\ketbra{\Psi}{\Psi})$)
\begin{eqnarray}
\rho_j\equiv{\rm tr}_B\left(U_f\sum_{i=0}^2a_i\ket{i}_A\ket{j}_B\ket{0}_A\ket{0}_B\right).
\end{eqnarray}
Alice can then attempt to distinguish between these using any
measurement of her choice.

The main result of this section is the following theorem.
\begin{theorem}
  Consider the computation of a $3\times 3$ deterministic function
  satisfying conditions \ref{cond1} and \ref{cond2}.  For each
  function of this type, there exists a set of coefficients, $\{a_i\}$
  such that when Alice has a uniform prior distribution over Bob's
  inputs and she inputs $\sum_{i=0}^2a_i\ket{i}_A$ into the protocol,
  there exists a measurement that gives her a better probability of
  distinguishing the three possible ($j$ dependent) output states than
  that given by her best honest strategy.
\label{thm1}
\end{theorem}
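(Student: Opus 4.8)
The plan has three stages: pin down the honest benchmark, reduce the cheating problem to distinguishing three pure states on $\mathbb{C}^3$, and then exhibit (non-constructively first, then explicitly) a superposed input that beats the benchmark.

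First I would compute Alice's output states. Writing $S_{j,k}=\{i:f(i,j)=k\}$, applying $U_f$ to $\sum_i a_i\ket{i}_A\ket{j}_B\ket{0}\ket{0}$ and tracing out Bob's two registers gives
\begin{equation}
\rho_j=\sum_k\ketbra{k}{k}_A\otimes\ketbra{v_{j,k}}{v_{j,k}},\qquad\ket{v_{j,k}}=\sum_{i\in S_{j,k}}a_i\ket{i},
\end{equation}
so $\rho_j$ is block diagonal in Alice's output register, each block a (sub-normalised) rank-one projector. Next I would check that Alice's best honest probability against a uniform prior equals $2/3$: by condition \ref{cond1} each column of $f$ has at most two distinct entries, so the honest input $i$ succeeds with probability equal to the number of distinct entries of column $i$ divided by three, hence at most $2/3$, and randomising the input cannot do better; conversely the three columns cannot all be constant (that would force all three rows of $f$ to coincide, contradicting condition \ref{cond2}), so some column attains $2/3$. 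Since the honest output is read in a fixed basis, no other measurement helps, so the benchmark is exactly $2/3$.

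For the cheating bound I would let Alice measure her output register first: conditioned on outcome $k$ she holds $\ket{v_{j,k}}/\lVert v_{j,k}\rVert$ on $\mathbb{C}^3$ with branch probability $\lVert v_{j,k}\rVert^2=\sum_{i\in S_{j,k}}|a_i|^2$. Since condition \ref{cond2} makes the rows of $f$ distinct, whereas an honest query sees only a single column, there is extra information to be extracted, and the task is to choose $\{a_i\}$ and a final POVM so that the induced ensemble of pure states is guessed correctly with probability exceeding $2/3$. Non-constructively, I would start from the honest input $\ket{i^\ast}$ for an optimal column $i^\ast$, where the distinguishing probability is $2/3$ with exactly one confusable pair $\{j_1,j_2\}$ of Bob-inputs; since rows $j_1$ and $j_2$ differ in some column $i'\neq i^\ast$, switching on a small amplitude along $\ket{i'}$ separates $\rho_{j_1}$ from $\rho_{j_2}$ at first order while perturbing the remaining structure only at second order, so the distinguishing probability, a continuous function on the input sphere, rises above $2/3$ nearby, provided this first-order gain is not cancelled by a freshly created confusion with $j_3$; I would arrange the latter through the sign and size of the perturbation. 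For the explicit strategy I would give $\{a_i\}$ together with the square-root measurement, case by case according to the shape of $f$: the uniform superposition $a=(1,1,1)/\sqrt3$ already suffices for the symmetric cases --- for $f(i,j)=1-\delta_{ij}$ it yields success $25/27$, since in the two-fold branch one must distinguish three states of pairwise overlap $1/2$, which is possible with probability $8/9$ --- while tilted weights handle the remaining shapes.

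The main obstacle is making this uniform over every $3\times3$ function obeying conditions \ref{cond1} and \ref{cond2}. Condition \ref{cond1} guarantees collisions and condition \ref{cond2} forbids redundancy, but converting ``some superposition helps'' into a claim that covers all admissible matrices --- controlling the potential cancellation in the perturbative argument, and organising the shape enumeration in the explicit construction --- is where the real effort lies; everything else is bookkeeping on $3\times3$ matrices and one short optimal-measurement computation.
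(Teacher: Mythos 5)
Your setup is fine: the block-diagonal form of $\rho_j$, the honest benchmark of $2/3$, and the $25/27$ figure for $f(i,j)=1-\delta_{ij}$ under the uniform superposition with the square-root measurement are all correct, and the explicit half of your plan (square-root measurement, checked shape by shape) is essentially what the paper does. The gap is in your non-constructive step. You claim that because rows $j_1,j_2$ differ in some column $i'\neq i^\ast$, a small amplitude on $\ket{i'}$ separates $\rho_{j_1}$ from $\rho_{j_2}$ at first order. That is false in general: writing $k=f(i^\ast,j_1)=f(i^\ast,j_2)$, a first-order (coherence) term appears only if $f(i',j_1)=k$ or $f(i',j_2)=k$. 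If neither holds, the perturbed $\rho_{j_1},\rho_{j_2}$ remain diagonal in the basis $\{\ket{i}\otimes\ket{k}\}$ on the span of $\ket{i^\ast},\ket{i'}$, and the attack restricted to that subspace is equivalent to a classical mixture of the two honest inputs, which yields exactly $2/3$---no gain at any order, not merely a second-order gain that might be beaten by losses. So the danger is not, as you suggest, a cancellation against a freshly created confusion with $j_3$; it is that the first-order gain may simply be absent for your chosen pair $(i^\ast,i')$, and then nothing in your argument produces an advantage.

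Consequently the real content of the theorem is the combinatorial fact that every $3\times 3$ function obeying conditions \ref{cond1} and \ref{cond2} admits \emph{some} choice of column $i^\ast$, confusable pair $\{j_1,j_2\}$ with common value $k$, and second column $i'$ on which that pair differs with one of the two entries equal to $k$. This is exactly what the Lemma in the paper delivers: after relabelling, the function has the form of Table \ref{tab1} with $a\neq b$ and $a=0$ or $b=0$ or $b=1$, and each alternative produces such a coincidence (take $i^\ast=0$, $i'=1$ when $a=0$ or $b=0$, and $i^\ast=1$, $i'=0$ when $b=1$). You have relegated precisely this step to ``bookkeeping on $3\times 3$ matrices,'' but it is the missing key ingredient. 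With it supplied, your perturbative route can be made rigorous (e.g., measure the output register and apply the two-state Helstrom measurement in the branch $k$, giving success roughly $2/3+\varepsilon/3$), and it would then be a genuinely different argument from the paper's, which instead feeds the fixed input $(\ket{0}+\ket{1})/\sqrt2$ into the canonical form and shows the honest measurement violates the Holevo--Helstrom optimality conditions.
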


\begin{table}
\begin{center}
\begin{tabular}{cc|ccc|}
\multirow{2}{*}{$f(i,j)$}&&\multicolumn{3}{|c|}{$i$}\\
&\;&\;0\;&\;1\;&\;2\;\\
\hline
\multirow{3}{*}{$j$}&0\;&\;$0$\;&\;$a$\;&\;$.$\;\\
&1\;&$0$&$b$&$.$\\
&2\;&$1$&$b$&$.$\\
\hline
\end{tabular}
\caption{This function can be taken as the most general $3\times 3$
  function satisfying conditions 1 and 2, where $a\neq b$, and $a=0$
  or $b=0$ or $b=1$.  The dots represent unspecified (and not
  necessarily identical) entries consistent with the conditions.}
\label{tab1}
\end{center}
\end{table}
\begin{proof}
We will rely on the following lemma.
\begin{lemma} 
All $3\times 3$ functions satisfying conditions 1 and 2 can be put in
the form of the function in Table \ref{tab1}.
\end{lemma}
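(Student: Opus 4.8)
The plan is to use the symmetry group of the problem. Relabelling Alice's inputs permutes the columns of the outcome matrix, relabelling Bob's inputs permutes the rows, and relabelling the output values applies an arbitrary injection to the entries; each of these operations preserves conditions \ref{cond1} and \ref{cond2}, and none of them affects whether the computation is secure. It therefore suffices to show that every matrix obeying conditions \ref{cond1} and \ref{cond2} lies in the orbit, under this group, of a matrix of the form displayed in Table \ref{tab1}.

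Call a length-$3$ column \emph{binary} if exactly two of its entries agree (by condition \ref{cond1} every column is either binary or constant), and call the row carrying its unequal entry the column's \emph{distinguished row}. The key structural observation is that any valid matrix contains two binary columns whose distinguished rows differ. If this failed, then either there is at most one binary column, or every binary column shares a single distinguished row $r$; in both cases the two rows other than $r$ agree in every column (in a binary column each of them takes that column's repeated value, in a constant column its constant value), contradicting condition \ref{cond2}.

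Given this, permute the rows so that the two binary columns just produced have distinguished rows $3$ and $1$; put the distinguished-row-$3$ column in position $i=0$ and the distinguished-row-$1$ column in position $i=1$, and relabel the outputs so that column $i=0$ reads $0,0,1$ from top to bottom. Column $i=1$ then has its odd entry on top, so it automatically reads $a,b,b$ with $a\neq b$, while column $i=2$ is unconstrained beyond conditions \ref{cond1} and \ref{cond2} (and one checks that conditions \ref{cond1}, \ref{cond2} are automatic among the already-fixed entries). This matches Table \ref{tab1} except possibly for the side condition $a=0$ or $b=0$ or $b=1$.

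It remains to arrange the side condition. Suppose it fails, so $a\neq 0$, $b\neq 0$ and $b\neq 1$. Writing column $i=2$ as $c_1,c_2,c_3$ and applying condition \ref{cond1} to the three rows $(0,a,c_1)$, $(0,b,c_2)$, $(1,b,c_3)$ forces $c_1\in\{0,a\}$, $c_2\in\{0,b\}$, $c_3\in\{1,b\}$; imposing condition \ref{cond1} on column $i=2$ and requiring it to differ from the first two columns leaves only a short list, in each case a binary column with distinguished row $1$ or $3$. Re-selecting that column to play the role of column $i=0$ (if its distinguished row is $3$) or of column $i=1$ (if it is $1$), and redoing the row permutation and output relabelling, yields a matrix of the form of Table \ref{tab1} with, respectively, $b=1$ or $a=0$, so the side condition now holds; the case $a=1$ needs slightly more bookkeeping, since more third-column configurations survive and one first moves a distinguished entry to the top or bottom by a row swap. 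This case check is where I expect the real work to be: the symmetry reductions and the binary-column lemma are routine, whereas here one must be exhaustive about which third columns are admissible and verify that each admits a relabelling into Table \ref{tab1}.
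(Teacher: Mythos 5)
Your argument is correct and rests on the same underlying mechanism as the paper's proof---normalizing by row and column permutations and output relabellings---but it is organized around a different key lemma and is explicit precisely where the paper is least so. The paper notes that at most one column can be constant, parks it at $i=2$, normalizes column $i=0$ to $(0,0,1)$, and splits on whether column $i=1$ reads $(a,a,b)$ or $(a,b,b)$, swapping columns $i=1$ and $i=2$ in the former case; your observation that condition 2 forces the existence of two binary columns with distinct distinguished rows reaches the pair $(0,0,1)$, $(a,b,b)$ more directly. For the side condition ($a=0$ or $b=0$ or $b=1$), which the paper disposes of in a single unsupported sentence, your enumeration is the right substantiation, and the subcase you deferred does go through: when $a\neq1$ the only admissible third columns are $(0,0,b)$ and $(0,b,b)$, handled exactly as you say (giving $b=1$ and $a=0$ respectively), while for $a=1$ the only extra survivors are $(1,0,1)$ and $(1,b,1)$, and pairing either with the column $(0,0,1)$ after swapping the top two rows produces an $i=1$ column $(0,1,1)$ or $(b,1,1)$, so the side condition holds there as well. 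Once that short check is written out, your proof is in fact more complete than the paper's own.
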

\begin{proof}
The essential properties of any function are unchanged under
permutations of rows or columns (which correspond to relabelling of
inputs), and under relabelling of outputs.  In order that the function
is potentially concealing, there can be at most one column whose
elements are identical.  By relabelling the columns if necessary, we
can ensure that this corresponds to $i=2$.  Relabelling the outputs
and rows, if necessary, the column corresponding to $i=0$ has entries
$(f(0,0),f(0,1),f(0,2))=(0,0,1)$.  The column corresponding to $i=1$
then must have entries $(a,a,b)$ or $(a,b,b)$, with $a\neq b$.  In the
case $(a,a,b)$, the $i=2$ column must have the form $(c,d,d)$, for
$c\neq d$, in which case we can permute the $i=1$ and $i=2$ columns to
recover the form $a,b,b$ for the $i=1$ column.  Relabellings always
put such cases into forms with $a=0$ or $b=0$ or $b=1$.
\end{proof}

Suppose Alice inputs $\frac{1}{\sqrt{2}}\left(\ket{0}+\ket{1}\right)$
into a function of the form given in Table \ref{tab1}.  After tracing
out Bob's systems, Alice holds one of
\begin{eqnarray}
\rho_0&=&\frac{1}{2}\left(\ketbra{00}{00}+\delta_{a,0}\left(\ketbra{00}{10}+\ketbra{10}{00}\right)+\ketbra{1a}{1a}\right)\\
\rho_1&=&\frac{1}{2}\left(\ketbra{00}{00}+\delta_{b,0}\left(\ketbra{00}{10}+\ketbra{10}{00}\right)+\ketbra{1b}{1b}\right)\\
\rho_2&=&\frac{1}{2}\left(\ketbra{01}{01}+\delta_{b,1}\left(\ketbra{01}{11}+\ketbra{11}{01}\right)+\ketbra{1b}{1b}\right).
\end{eqnarray}
Measurement using the set $\{E_{i,k}=\ketbra{ik}{ik}\}$ in effect
reverts to an honest strategy.  The probability of correctly
guessing Bob's input using these operators is the same as that
for Alice's best honest strategy.  These operators can be combined to
form just three operators, $\{E_{j'}\}$ such that a result
corresponding to $E_{j'}$ means that Alice's best guess of Bob's input
is $j'$.  Then
\begin{eqnarray}
\label{E0}E_0&=&\alpha_1\ketbra{00}{00}+\delta_{a,0}\ketbra{10}{10}+\delta_{a,1}\ketbra{11}{11}+\delta_{a,2}\ketbra{12}{12}+\delta_{a,3}\ketbra{13}{13}\\
\label{E1}E_1&=&(1-\alpha_1)\ketbra{00}{00}+\alpha_2\delta_{b,0}\ketbra{10}{10}+\alpha_3\delta_{b,1}\ketbra{11}{11}+\alpha_4\delta_{b,2}\ketbra{12}{12}+\alpha_5\delta_{b,3}\ketbra{13}{13}\\
\label{E2}E_2&=&\openone-E_0-E_1,
\end{eqnarray}
where the $\{\alpha_l\}$ are arbitrary parameters, $0\leq\alpha_l\leq
1$, and do not affect the success probability.  We will show that such
a measurement is not optimal to distinguish between the corresponding
$\{\rho_j\}$.  This follows from an existing result in state
estimation theory, as stated in the following theorem
\cite{Holevo,Yuen&,Helstrom}.
\begin{theorem}
Consider using a set of $M$ measurement operators, $\{E_j\}$, to
discriminate between a set of $M$ states, $\{\rho_j\}$, which occur
with prior probabilities, $\{q_j\}$, where the outcome corresponding
to operator $E_j$ indicates that the best guess of the state is
$\rho_j$.  The set $\{E_j\}$ is optimal if and only if
\begin{eqnarray}
\label{con1}E_j\left(q_j\rho_j-q_l\rho_l\right)E_l&=&0\;\;\;\forall\;
j,l\\ \label{con2}\sum_j E_jq_j\rho_j-q_l\rho_l&\geq &
0\;\;\;\forall\; l.  
\end{eqnarray} 
\label{thm2}
\end{theorem}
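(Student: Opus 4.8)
The plan is to recognise Theorem~\ref{thm2} as the Holevo--Yuen--Kennedy--Lax optimality criterion for minimum-error quantum state discrimination. The quantity to be optimised over the convex set of POVMs $\{E_j\}$ (those with $E_j\geq 0$ and $\sum_jE_j=\openone$) is the success probability $P(\{E_j\})=\sum_j q_j\,\mathrm{tr}(\rho_j E_j)$, which we want to maximise, and the object that does all the work is the operator $\Gamma=\sum_j E_j\,q_j\rho_j$; by cyclicity of the trace $P(\{E_j\})=\mathrm{tr}(\Gamma)$ for every POVM. I would prove the two implications of the ``if and only if'' separately.

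For \emph{sufficiency}, suppose $\{E_j\}$ obeys (\ref{con1}) and (\ref{con2}). Summing (\ref{con1}) over $j$ and using $\sum_jE_j=\openone$ gives $\Gamma E_l=q_l\rho_l E_l$ for every $l$; summing that over $l$ gives $\Gamma=\sum_l q_l\rho_l E_l$, and since each $q_l$ is a scalar this also equals $\sum_l E_l q_l\rho_l=\Gamma^\dagger$, so $\Gamma$ is Hermitian and (\ref{con2}) is precisely the statement $\Gamma\geq q_l\rho_l$ for all $l$. Then for an arbitrary POVM $\{F_j\}$,
\begin{equation}
P(\{F_j\})=\sum_j\mathrm{tr}(q_j\rho_j F_j)\leq\sum_j\mathrm{tr}(\Gamma F_j)=\mathrm{tr}(\Gamma)=P(\{E_j\}),
\end{equation}
where the inequality holds because $\Gamma-q_j\rho_j\geq 0$ and $F_j\geq 0$ and the middle equality because $\sum_jF_j=\openone$. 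So $\{E_j\}$ attains the universal upper bound $\mathrm{tr}(\Gamma)$ and is optimal.

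For \emph{necessity}, I would treat the maximisation as a semidefinite program whose Karush--Kuhn--Tucker conditions hold because the interior point $E_j=\openone/M$ is strictly feasible (Slater's condition) and the normalisation constraint is affine. Introducing a Hermitian multiplier $\Lambda$ for $\sum_jE_j=\openone$ and positive-semidefinite multipliers $M_j$ for the constraints $E_j\geq 0$, stationarity of the Lagrangian in each $E_j$ gives $\Lambda=q_j\rho_j+M_j\geq q_j\rho_j$, while complementary slackness $\mathrm{tr}(M_jE_j)=0$ together with $M_j,E_j\geq 0$ forces $M_jE_j=0$ and hence $\Lambda E_j=q_j\rho_j E_j$. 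Summing the last identity over $j$ identifies $\Lambda$ with $\Gamma$ (which is therefore Hermitian); the inequality $\Lambda\geq q_l\rho_l$ then reads $\sum_j E_j q_j\rho_j-q_l\rho_l\geq 0$, i.e.\ (\ref{con2}), and combining $\Lambda E_j=q_j\rho_j E_j$ with its adjoint $E_j\Lambda=q_j E_j\rho_j$ by evaluating $E_j\Lambda E_l$ in two ways yields $q_j E_j\rho_j E_l=q_l E_j\rho_l E_l$, i.e.\ (\ref{con1}).

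The main obstacle is this necessity direction: one must produce the single Hermitian operator $\Gamma$ that simultaneously dominates every $q_l\rho_l$ and is tangent to it on the range of $E_l$, which the duality/KKT argument delivers cleanly. A bare-hands alternative---showing no feasible first-order perturbation of the $E_j$ can raise $P$, with (\ref{con1}) extracted from Hermitian mass transfers between pairs of operators and (\ref{con2}) from rank-one perturbations that push weight into a new direction---requires careful bookkeeping of the positivity constraints whenever some $E_l$ is rank-deficient and hence lies on the boundary of the feasible set; the KKT formulation packages this away. Once $\Gamma$ is in hand with the properties $\Gamma\geq q_l\rho_l$ and $\Gamma E_l=q_l\rho_l E_l$, everything else is routine.
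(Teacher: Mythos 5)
Your proof is correct, but there is nothing in the paper to compare it with: Theorem~\ref{thm2} is not proved there at all, it is imported as a known result of minimum-error state estimation and attributed to Holevo, Yuen--Kennedy--Lax and Helstrom \cite{Holevo,Yuen&,Helstrom}. What you have written is essentially the classical Yuen--Kennedy--Lax argument. Sufficiency: summing (\ref{con1}) over $j$ gives $\Gamma E_l=q_l\rho_l E_l$ with $\Gamma=\sum_jE_jq_j\rho_j$, whence $\Gamma$ is Hermitian, (\ref{con2}) reads $\Gamma\geq q_l\rho_l$, and $\mathrm{tr}\,\Gamma$ is then a universal upper bound on the success probability that $\{E_j\}$ attains; this is sound, and you rightly flag (implicitly) that Hermiticity of $\Gamma$ is what makes (\ref{con2}) meaningful as an operator inequality. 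The only blemish is notational: your chain ``$\Gamma=\sum_l q_l\rho_l E_l$ \ldots equals $\sum_l E_l q_l\rho_l=\Gamma^\dagger$'' mislabels the two sums; the correct statement is $\Gamma=\sum_l q_l\rho_l E_l=\bigl(\sum_l E_l q_l\rho_l\bigr)^\dagger=\Gamma^\dagger$, using that the $q_l$ are real and $\rho_l$, $E_l$ Hermitian. Necessity: your semidefinite-programming route (Slater's condition from the strictly feasible point $E_j=\openone/M$, dual feasibility $\Lambda\geq q_j\rho_j$, complementary slackness $\mathrm{tr}\bigl((\Lambda-q_j\rho_j)E_j\bigr)=0$ forcing $(\Lambda-q_j\rho_j)E_j=0$ because a vanishing trace of a product of positive operators forces the product to vanish, identification $\Lambda=\Gamma$ by summing over $j$, and extraction of (\ref{con1}) by evaluating $E_j\Lambda E_l$ two ways) is a clean and standard way to produce the single dominating Hermitian operator, and it mirrors the duality treatment in the cited references. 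In short, the proposal is a correct, self-contained proof of a statement the paper only cites.
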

In the case of uniform prior probabilities, Equations (\ref{con1}) and
(\ref{con2}) imply respectively
\begin{eqnarray}
\label{first}
\left(\alpha_1=0\quad{\rm or }\quad\alpha_2=0\quad{\rm or
}\quad b\neq 0\right)\quad{\rm and}\quad \left(\alpha_1=1\quad{\rm or }\quad
a\neq 0\right)\quad{\rm and}\\\nonumber \quad \left(\alpha_1=1\quad{\rm or
}\quad\alpha_2=1\quad{\rm or }\quad b\neq 0\right)\quad{\rm and}\quad
\left(\alpha_3=0\quad{\rm or }\quad b\neq 1\right),
\end{eqnarray}
and
\begin{eqnarray}
\nonumber\left(b=1\quad{\rm or }\quad\alpha_3\geq\frac{1}{4}\right)\quad{\rm and}\quad
\left(b=0\quad{\rm or }\quad\alpha_2(1-\alpha_1)\geq\frac{1}{4}\right)\quad{\rm
and}\quad 
\left(a=1\quad{\rm or }\quad\alpha_3=1\quad{\rm or }\quad b\neq
1\right)\quad{\rm and}\quad \\
\left(\alpha_1=0\quad{\rm or }\quad(b\neq 0\quad{\rm and}\quad a\neq
0)\right)\quad{\rm and}\quad
\left(\alpha_1=1\quad{\rm or }\quad
b\neq 0\quad{\rm or }\quad\alpha_2=0\right).
\end{eqnarray}

In addition, because the function is in the form given in Table
\ref{tab1}, we also have
\begin{eqnarray}
\label{last}\left(a=0\quad{\rm or }\quad b=0\quad{\rm or }\quad
b=1\right)\quad{\rm and}\quad a\neq b.
\end{eqnarray}
The system of equations (\ref{first}--\ref{last}) cannot be satisfied
for any values of $a,b,\{\alpha_k\}$.  Hence, the measurement
operators (\ref{E0}--\ref{E2}) are not optimal for discriminating
between Bob's inputs, so Alice always has a cheating strategy.
\end{proof}

Our proof of Theorem \ref{thm1} is non-constructive---we have shown
that cheating is possible, but not explicitly how it can be done.
Except in special cases (e.g., where the states $\{\rho_j\}$ are
symmetric), no procedure for finding the optimal POVM to distinguish
between states is known \cite{Chefles,JRF}.  Nevertheless, we have
found a construction based on the square root measurement
\cite{HJW,HW} that, while not being optimal, gives a higher
probability of successfully guessing Bob's input than any honest
strategy.

The strategy applies to the states, $\sigma_j$, formed when Alice
inputs $\frac{1}{\sqrt{3}}\left(\ket{0}+\ket{1}+\ket{2}\right)$.  The
set of operators are those corresponding to the square root
measurement, defined by
\begin{eqnarray}
E_{j'}=\left(\sum_j\sigma_j\right)^{-\frac{1}{2}}\sigma_{j'}\left(\sum_j\sigma_j\right)^{-\frac{1}{2}}.
\end{eqnarray}
One can verify, case by case, that this strategy affords Alice a
better guessing probability over Bob's input than any honest one for
all functions of the form of Table \ref{tab1}.  The Mathematica script
which we have used to check this is available on the world wide web
\cite{mathematica_script}.

\section{Non-Deterministic Functions}
\subsection{Two-sided case}
\begin{table}
\begin{center}
\begin{tabular}{cc|cc|}
\multirow{2}{*}{$p(0|i,j)$}&&\multicolumn{2}{|c|}{$i$}\\
&&\;0\;&\;1\;\\
\hline
\multirow{2}{*}{$j$}&0\;&\;$p_{00}$\;&\;$p_{10}$\;\\
&1\;&\;$p_{01}$\;&\;$p_{11}$\;\\
\hline
\end{tabular}
\caption{The entries in the table give the probabilities of output 0
  given inputs $i,j$.  For example, if both parties input 0, then the
  output of the function is 0 with probability $p_{00}$, and 1 with
  probability $1-p_{00}$.}
\label{function1}
\end{center}
\end{table}

Initially, we specialize to the case $i,j,k\in\{0,1\}$.  We specify
such functions via a matrix of probabilities as given in Table
\ref{function1}.  For the two-sided case, the relevant black box
implements the unitary, $U$, given by
\begin{eqnarray}
\label{U}U\ket{i}_A\ket{j}_B\ket{0}\ket{0}=\ket{i}_A\ket{j}_B\left(\sqrt{p_{ij}}\ket{00}_{AB}+\sqrt{1-p_{ij}}\ket{11}_{AB}\right).
\end{eqnarray}
Suppose that Alice has prior information about Bob's input such that,
from her perspective, he will input $0$ with probability $q_0$, and
$1$ with probability $q_1=1-q_0$.  The maximum probability of correctly
guessing Bob's input using an honest strategy is
\begin{equation}
\label{ph}
p_h=\max_i(\max_j(p_{ij}q_j)+\max_j((1-p_{ij})q_j)).
\end{equation}
Denote Alice's final state by $\rho_j$, where $j$ is Bob's input.  The
optimal strategy to distinguish $\rho_0$ and $\rho_1$ is successful
with probability \cite{Helstrom}
\begin{eqnarray}
\label{pc}
\frac{1}{2}\left(1+{\rm tr}\left|q_0\rho_0-q_1\rho_1\right|\right).
\end{eqnarray}
\begin{theorem}
Let Alice input $\frac{1}{\sqrt{2}}\left(\ket{0}+\ket{1}\right)$ and
Bob input $j$ into the computation given in (\ref{U}).  Let Alice
implement the optimal measurement to distinguish the corresponding
$\rho_0$ and $\rho_1$ and call the probability of a correct guess
using this measurement $p_c$. Then, for all
$\{p_{00},p_{01},p_{10},p_{11}\}$, there exists a value of $q_0$ such
that $p_c>p_h$, unless, 
\begin{enumerate}
\item $p_{00}=p_{10}$ and $p_{01}=p_{11}$, or 
\item $p_{00}=p_{01}$ and $p_{10}=p_{11}$. 
\end{enumerate}
\label{thm3}
\end{theorem}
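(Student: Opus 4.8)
My plan is to expose Alice's extra power by computing her post‑protocol state after the equal‑superposition input, evaluating the Helstrom bound (\ref{pc}) for it in closed form, and comparing with the honest bound (\ref{ph}) rewritten as a discrimination bound. Tracing Bob's registers out of (\ref{U}) for Alice's input $\frac{1}{\sqrt{2}}(\ket0+\ket1)$ gives $\rho_j=\tfrac12\big(\ketbra{v^{(0)}_j}{v^{(0)}_j}\otimes\ketbra00+\ketbra{v^{(1)}_j}{v^{(1)}_j}\otimes\ketbra11\big)$, the first tensor factor acting on Alice's input register and the second on her output qubit, with $\ket{v^{(0)}_j}=\sqrt{p_{0j}}\ket0+\sqrt{p_{1j}}\ket1$ and $\ket{v^{(1)}_j}=\sqrt{1-p_{0j}}\ket0+\sqrt{1-p_{1j}}\ket1$. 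The two structural facts I would exploit are that $\rho_j$ is block diagonal in the output qubit and that each block is rank one; consequently Alice's optimal measurement is to read her output qubit and then optimally discriminate the two pure states $\ket{v^{(0)}_0},\ket{v^{(0)}_1}$ (outcome $0$) or $\ket{v^{(1)}_0},\ket{v^{(1)}_1}$ (outcome $1$) with the induced posterior weights.

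By block‑diagonality and (\ref{pc}), $p_c=\tfrac12\big(1+\|M_0\|_1+\|M_1\|_1\big)$ with $M_k:=\tfrac{q_0}{2}\ketbra{v^{(k)}_0}{v^{(k)}_0}-\tfrac{q_1}{2}\ketbra{v^{(k)}_1}{v^{(k)}_1}$, a $2\times2$ Hermitian matrix. Being a difference of two rank‑one positives, $M_k$ has $\det M_k\le0$, so $\|M_k\|_1=\sqrt{(\mathrm{tr}\,M_k)^2+4|\det M_k|}$; one finds $\mathrm{tr}\,M_0=\tfrac12\big(q_0(p_{00}+p_{10})-q_1(p_{01}+p_{11})\big)$, $\mathrm{tr}\,M_1$ the same with each $p_{ij}\mapsto1-p_{ij}$, and the identity $|\det M_k|=\tfrac{q_0q_1}{4}\big(\sqrt{p^{(k)}_{00}p^{(k)}_{11}}-\sqrt{p^{(k)}_{01}p^{(k)}_{10}}\big)^2$ with $p^{(0)}_{ij}=p_{ij}$, $p^{(1)}_{ij}=1-p_{ij}$. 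In parallel I would rewrite (\ref{ph}), using $\max(x,y)=\tfrac12(x+y+|x-y|)$, as $p_h=\tfrac12\big(1+\max_i\|q_0\tau^{(i)}_0-q_1\tau^{(i)}_1\|_1\big)$, where $\tau^{(i)}_j=p_{ij}\ketbra00+(1-p_{ij})\ketbra11$ is Alice's honest output state for input $i$. The theorem then reduces to finding $q_0\in(0,1)$ with $\|M_0\|_1+\|M_1\|_1>\max_i\|q_0\tau^{(i)}_0-q_1\tau^{(i)}_1\|_1$.

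For the comparison I would use: (i) $\|M_k\|_1\ge|\mathrm{tr}\,M_k|$, with equality iff $\det M_k=0$; (ii) $|\mathrm{tr}\,M_0|+|\mathrm{tr}\,M_1|=\|q_0\bar\tau_0-q_1\bar\tau_1\|_1$ for the input‑averaged honest state $\bar\tau_j=\tfrac12(\tau^{(0)}_j+\tau^{(1)}_j)$, and this averaged quantity is $\le\max_i\|q_0\tau^{(i)}_0-q_1\tau^{(i)}_1\|_1$ by the triangle inequality; and (iii) $\det M_0=\det M_1=0$ forces $p_{00}p_{11}=p_{01}p_{10}$ and $(1-p_{00})(1-p_{11})=(1-p_{01})(1-p_{10})$, and adding these gives $p_{00}+p_{11}=p_{01}+p_{10}$, so $\{p_{00},p_{11}\}$ and $\{p_{01},p_{10}\}$ share both sum and product and hence coincide — precisely one of the two excluded families. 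Thus, away from the excluded families, $\det M_k<0$ for at least one $k$, and by (i)--(iii), $\|M_0\|_1+\|M_1\|_1>\|q_0\bar\tau_0-q_1\bar\tau_1\|_1$ for \emph{every} $q_0\in(0,1)$; what remains is to choose $q_0$ at which the averaged bound of (ii) actually attains its maximum. Natural candidates are: a $q_0$ large enough that both honest discriminations degenerate to prior‑guessing, i.e.\ $\|q_0\tau^{(i)}_0-q_1\tau^{(i)}_1\|_1=|q_0-q_1|$ for both $i$, which holds whenever $q_0\ge\max_i\max\{p_{i1}/(p_{i0}+p_{i1}),\,(1-p_{i1})/(2-p_{i0}-p_{i1})\}$; the symmetric value near $q_0=0$; or the $q_0$ solving $q_0p_{00}-q_1p_{01}=q_0p_{10}-q_1p_{11}$, at which the two honest inputs are equally informative. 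At such a $q_0$ the strict gain over the averaged bound delivers $p_c>p_h$.

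The main obstacle is this last step: guaranteeing that some "good" $q_0$ always lies strictly inside $(0,1)$. The first candidate interval is interior exactly when $p_{00},p_{10}\in(0,1)$, the symmetric one when $p_{01},p_{11}\in(0,1)$, and the equal‑information point lies in $(0,1)$ only when $(p_{00}-p_{10})(p_{01}-p_{11})>0$; when none of these holds (each Bob‑input column of the table carries a $0$ or a $1$, and the two inputs are anti‑informative) one is left with a finite but delicate case analysis organised by which $p_{ij}$ are extremal and by the signs of the absolute values, and I would expect this bookkeeping — plausibly machine‑checked, as for Theorem \ref{thm1} — to be where the work concentrates. I would also want to revisit whether the equal superposition really suffices for every non‑excluded table: a direct computation for $p_{00}=p_{11}=1$, $p_{01}=p_{10}=\tfrac12$ gives $p_c\le p_h$ for all $q_0$ (with equality only at $q_0=\tfrac12$), so either an unequal input superposition must be used there or this and similar tables must join the exception list — and pinning down the exact exceptional set is, I think, the crux of the statement.
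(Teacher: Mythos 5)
Your structural analysis reproduces the paper's: the block decomposition of $q_0\rho_0-q_1\rho_1$ over the output qubit, the identity $\|M_k\|_1=\sqrt{(\mathrm{tr}\,M_k)^2+4|\det M_k|}$ (your $\mathrm{tr}\,M_0$ and $|\det M_0|$ are exactly the paper's $a/2$ and $b/16$), and the observation that $\det M_0=\det M_1=0$ forces precisely the two excluded families all match the paper's eigenvalue computation. Where you part ways is the comparison with $p_h$, and that is where your argument has a genuine hole: you only establish $p_c$ strictly above the \emph{input-averaged} honest benchmark $\|q_0\bar\tau_0-q_1\bar\tau_1\|_1$, but the honest party gets the \emph{maximum} over $i$, which can strictly exceed the average at every prior; your candidate choices of $q_0$ to bridge this are heuristic and, as you concede, you cannot guarantee one lies in $(0,1)$. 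The paper avoids this comparison altogether by choosing the prior extreme: with $q_0=1-\epsilon$ it asserts $p_h=q_0$ (honest guessing degenerates to prior-guessing), after which $\det M_k<0$ for some $k$ --- equivalently $b>0$ or $\bar b>0$ --- gives $\|M_0\|_1+\|M_1\|_1>\mathrm{tr}\,M_0+\mathrm{tr}\,M_1=q_0-q_1$ and hence $p_c>p_h$ immediately.

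However, your closing worry is substantiated, and it hits the paper's proof, not only yours. The step ``$q_0=1-\epsilon\Rightarrow p_h=q_0$'' from (\ref{ph}) fails whenever some $p_{i0}\in\{0,1\}$ with $p_{i1}\neq p_{i0}$: then the honest output occasionally identifies Bob's input with certainty and $p_h=q_0+\Theta(\epsilon)$, which is of the same order as the cheating gain $\Theta(b)=\Theta(\epsilon)$. Your table $p_{00}=p_{11}=1$, $p_{01}=p_{10}=\frac12$ is exactly such a case and your computation checks out: with $s=2q_0-1$ one finds $p_h=\frac{3+|s|}{4}$ while the equal-superposition Helstrom value is $p_c=\frac{5+\sqrt{8s^2+1}}{8}$, so $p_c\le p_h$ for every $q_0$ (equality only at $q_0\in\{0,\frac12,1\}$), even though this table is not in the exception list of Theorem \ref{thm3}. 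So the statement as written needs either the implicit hypothesis $0<p_{ij}<1$ (or, more precisely, that no column of Table \ref{function1} contains an unmatched $0$ or $1$) or an enlarged exception list; under that hypothesis the paper's extreme-prior argument is sound, and it would also close your argument, since at $q_0=1-\epsilon$ the quantity to beat is just $q_0-q_1=\mathrm{tr}\,M_0+\mathrm{tr}\,M_1$, which your item (i) plus $\det M_k<0$ already handles. In short: your proposal is not a complete proof of the stated theorem, but the obstruction you identified is real, and no proof of the statement in its current generality can succeed.
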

The two exceptional cases correspond to functions for which only one
party can make a meaningful input.  We hence conclude that all
genuinely two-input functions of this type are impossible to compute
securely.
\begin{proof}
Take $q_0=1-\epsilon$. For sufficiently small $\epsilon>0$, (\ref{ph})
implies $p_h=q_0$.
We then seek $p_c$.  The eigenvalues of
$q_0\rho_0-q_1\rho_1$ are
\begin{eqnarray}
\lambda_{\pm}&=&\frac{1}{4}\left(a(\{p_{i,j}\})\pm\sqrt{a^2(\{p_{i,j}\})+b(\{p_{i,j}\})}\right)\\
\mu_{\pm}&=&\frac{1}{4}\left(a(\{\overline{p_{i,j}}\})\pm\sqrt{a^2(\{\overline{p_{i,j}}\})+b(\{\overline{p_{i,j}}\})}\right),
\end{eqnarray}
where $a(\{p_{i,j}\})=(p_{00}+p_{10})q_0-(p_{01}+p_{11})q_1$,
$b(\{p_{i,j}\})=4(\sqrt{p_{01}p_{10}}-\sqrt{p_{00}p_{11}})^2q_0q_1$,
and $\overline{p_{ij}}\equiv 1-p_{ij}$.

For $\epsilon$ sufficiently small, we have $a\gg b>0$.  Using
$\sqrt{1+x}\leq 1+\frac{x}{2}$, we find,
$\lambda_+\geq\frac{1}{4}(2a(\{p_{i,j}\})+\frac{b(\{p_{i,j}\})}{2a(\{p_{i,j}\})}),$ 
$\lambda_-\leq -\frac{b(\{p_{i,j}\})}{8a(\{p_{i,j}\})},$ 
$\mu_+\geq\frac{1}{4}(2a(\{\overline{p_{i,j}}\})+\frac{b(\{\overline{p_{i,j}}\})}{2a(\{\overline{p_{i,j}}\})}),$
and $\mu_-\leq
-\frac{b(\{\overline{p_{i,j}}\})}{8a(\{\overline{p_{i,j}}\})}$, with
equality iff $b(\{p_{i,j}\})=0$ and $b(\{\overline{p_{i,j}}\})=0$.  We
hence have
$\frac{1}{2}\left(1+\text{tr}|q_0\rho_0-q_1\rho_1|\right)\geq q_0$
and so $p_c\geq p_h$, with equality iff $p_{00}=p_{10}$ and
$p_{01}=p_{11}$, or $p_{00}=p_{01}$ and $p_{10}=p_{11}$.
\end{proof}
The explicit form of the cheating measurement is given in
\cite{Helstrom}.

\subsection{One-sided case}
For one-sided computations of non-deterministic functions, Alice can
cheat without inputting a superposed state.  In this case, the black
box performs the unitary
\begin{eqnarray}
U\ket{i}_A\ket{j}_B\ket{0}=\ket{i}_A\ket{j}_B\left(\sqrt{p_{ij}}\ket{0}_A+\sqrt{1-p_{ij}}\ket{1}_A\right),
\end{eqnarray}
where the last qubit goes to Alice at the end of the protocol.  The
following theorem shows that such computations cannot be securely
implemented.
\begin{theorem}
Having made an honest input to the black box above, Alice's optimum
procedure to correctly guess Bob's input is not given by a
measurement in the $\{\ket{0},\ket{1}\}$ basis, except if 
$p_{ij}\in\{0,1\}$ for all $i$, $j$.
\label{thm4}
\end{theorem}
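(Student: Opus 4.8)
The plan is to recognise the one-sided problem as ordinary discrimination of two pure qubit states and then read off from the Helstrom optimum exactly when that optimum is the computational-basis measurement. After Alice makes the honest input $i$, the box outputs the product state $\ket{i}_A\ket{j}_B\ket{\phi_{ij}}_A$ with $\ket{\phi_{ij}}=\sqrt{p_{ij}}\ket{0}+\sqrt{1-p_{ij}}\ket{1}$; since Alice knows $i$, guessing Bob's input is precisely the problem of distinguishing $\ket{\phi_{i0}}$ from $\ket{\phi_{i1}}$ under whatever prior $(q_0,q_1)$ Alice holds on $j$. By the two-state result of Helstrom already quoted above, the optimal success probability is $\tfrac12(1+\mathrm{tr}|\Gamma_i|)$ with $\Gamma_i:=q_0\ketbra{\phi_{i0}}{\phi_{i0}}-q_1\ketbra{\phi_{i1}}{\phi_{i1}}$, attained by the projective measurement onto the eigenvectors of $\Gamma_i$.

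The key computation is the eigenstructure of $\Gamma_i$ on the (at most two-dimensional) span of the two states. One finds $\det\Gamma_i=-q_0q_1\bigl(1-|\langle\phi_{i0}|\phi_{i1}\rangle|^2\bigr)$ with $\langle\phi_{i0}|\phi_{i1}\rangle=\sqrt{p_{i0}p_{i1}}+\sqrt{(1-p_{i0})(1-p_{i1})}$, which equals $1$ exactly when $p_{i0}=p_{i1}$. So whenever $0<q_0<1$ and $p_{i0}\ne p_{i1}$, $\Gamma_i$ has one strictly positive and one strictly negative eigenvalue; its eigenbasis is then unique, and the corresponding projective measurement strictly outperforms every other two-outcome measurement unless $\Gamma_i$ is already diagonal in the $\{\ket{0},\ket{1}\}$ basis. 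Diagonality amounts to $\langle 0|\Gamma_i|1\rangle=q_0\sqrt{p_{i0}(1-p_{i0})}-q_1\sqrt{p_{i1}(1-p_{i1})}=0$.

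To conclude, suppose $p_{ij}\notin\{0,1\}$ for some pair $(i,j)$ (and, as elsewhere in the paper, that the function is non-degenerate and potentially concealing, so that Bob's input genuinely influences Alice's state and no honest input of Alice's reveals $j$ with certainty). Then there is an input $i$ with $p_{i0}\ne p_{i1}$ and at least one of $p_{i0},p_{i1}$ in the open interval $(0,1)$, and one can pick a prior $0<q_0<1$ for which $q_0\sqrt{p_{i0}(1-p_{i0})}\ne q_1\sqrt{p_{i1}(1-p_{i1})}$: the uniform prior works if exactly one of the two radicals vanishes, and any prior whose ratio $q_0/q_1$ avoids the single critical value works if both are nonzero. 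For such a choice the $\{\ket{0},\ket{1}\}$ measurement is a strictly suboptimal projective measurement, so Alice's best procedure is a rotated measurement that strictly beats the honest computational-basis readout. Conversely, if $p_{ij}\in\{0,1\}$ for all $i,j$ then each $\ket{\phi_{ij}}$ is itself a computational-basis state, the $\{\ket{0},\ket{1}\}$ measurement identifies it with certainty, and since $f(i,j)$ is then the only $j$-dependent datum Alice can obtain nothing beats it --- the stated exception. The part needing most care is precisely this boundary: checking that the corner cases ($q_0\in\{0,1\}$, a constant column $p_{i0}=p_{i1}$, or an orthogonal column $\{p_{i0},p_{i1}\}=\{0,1\}$) are exactly the ones ruled out by non-degeneracy and concealment, and that ``optimal measurement $\neq$ computational basis'' really does upgrade to a strict advantage; the determinant and off-diagonal computations themselves are routine.
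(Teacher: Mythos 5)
Your proposal is correct, and its core is the same calculation as the paper's: optimality of the computational-basis readout reduces to the vanishing of the off-diagonal element $q_0\sqrt{p_{i0}(1-p_{i0})}-q_1\sqrt{p_{i1}(1-p_{i1})}$ of $q_0\rho_0-q_1\rho_1$; you obtain this from the Helstrom eigenbasis for two pure states, while the paper reads it off from condition (\ref{con1}) of Theorem \ref{thm2} applied to $\{\ketbra{0}{0},\ketbra{1}{1}\}$ --- for two states these are the same criterion. Where you genuinely diverge is in how the argument is closed. The paper invokes the ``for all priors'' clause of its security condition: demanding the equality for every $q_0$ forces both radicals to vanish, hence $p_{ij}\in\{0,1\}$, with no structural assumptions on the function. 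You instead fix a prior, and therefore need the auxiliary hypotheses that some input $i$ has $p_{i0}\neq p_{i1}$ and that no input of Alice reveals $j$ with certainty; in exchange you prove something stronger, namely a strict advantage over the computational-basis measurement even allowing arbitrary classical post-processing of its outcome (since with $p_{i0}\neq p_{i1}$ and $0<q_0<1$ both eigenvalues of $q_0\rho_0-q_1\rho_1$ are nonzero, every diagonal POVM is strictly suboptimal). That extra care is warranted: the paper's criterion tests only the fixed labelling ``outcome $k$ means guess $j=k$'', and for, say, $p_{i0}=p_{i1}=\tfrac12$ its failure reflects bad post-processing rather than any better measurement, so the literal theorem without your side conditions holds only in this weaker sense. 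One caution: state those side conditions minimally (Bob's input affects Alice's state for some honest input $i$; no honest input of Alice reveals $j$ with certainty) rather than as full non-degeneracy of the probability table --- the corollary on one-sided variable-bias coin tossing applies the theorem to functions with $p_{00}=p_{10}$ and $p_{01}=p_{11}$, i.e.\ identical columns for Alice, which a blanket non-degeneracy assumption would wrongly exclude.
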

\begin{proof} 
From (\ref{con1}) of Theorem \ref{thm2}, if Alice inputs $i=1$, the
measurement operators $\{\ketbra{0}{0},\ketbra{1}{1}\}$ are optimal
only if
\begin{equation}
q_0\sqrt{p_{10}(1-p_{10})}=(1-q_0)\sqrt{p_{11}(1-p_{11})}.
\end{equation}
For this to hold for all $q_0$, we require that either $p_{11}=0$ or
$p_{11}=1$, and either $p_{10}=0$ or $p_{10}=1$.  Similarly, if Alice
inputs $i=0$, we require either $p_{01}=0$ or $p_{01}=1$, and either
$p_{00}=0$ or $p_{00}=1$, in order that the specified measurement
operators are optimal.
\end{proof}
These exceptions correspond to functions that are deterministic, so do
not properly fall into the class presently being discussed.  Many are
essentially single-input, hence trivial, and all such exceptions are
either degenerate or not potentially concealing.

Our theorem also has the following consequence.
\begin{corollary}
One-sided variable-bias coin tossing \cite{CK1} is impossible.
\end{corollary}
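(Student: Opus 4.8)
The plan is to exhibit one-sided variable-bias coin tossing as a special case of the one-sided non-deterministic two-input computation treated in Theorem \ref{thm4}, and then simply invoke that theorem. First I would recall the definition from \cite{CK1}: in variable-bias coin tossing, Bob inputs a bias $j$ (drawn from some agreed set of values), and the protocol outputs to Alice a single bit whose distribution depends on $j$; in the one-sided variant only Alice learns the outcome. The security requirement is that Alice must not be able to learn more about Bob's chosen bias $j$ than is implied by the bit she legitimately receives. Casting this in the notation of the preceding subsection, the task is computed by a black box of exactly the form
\begin{eqnarray}
U\ket{i}_A\ket{j}_B\ket{0}=\ket{i}_A\ket{j}_B\left(\sqrt{p_{ij}}\ket{0}_A+\sqrt{1-p_{ij}}\ket{1}_A\right),
\end{eqnarray}
where $p_{ij}$ is the probability of outcome $0$ given bias $j$; since Bob's bias is the only meaningful input, the $p_{ij}$ genuinely depend on $j$, and to have a nontrivially biased coin we need some $p_{ij}\notin\{0,1\}$.

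The key step is then to check that the exceptional case of Theorem \ref{thm4} does not arise: variable-bias coin tossing is interesting precisely because at least one of the available biases is strictly between $0$ and $1$ (otherwise the ``coin'' is deterministic and the task is vacuous). Hence $p_{ij}\in\{0,1\}$ for all $i,j$ fails, so Theorem \ref{thm4} applies and tells us that Alice's optimal guessing strategy for Bob's input is \emph{not} a measurement in the $\{\ket{0},\ket{1}\}$ basis — that is, there is a measurement on her output qubit that extracts more information about $j$ than honest readout of the coin does. By the Security Condition this breaks security, so no perfectly secure, perfectly correct protocol for the task can exist.

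I expect the only real obstacle to be a definitional one rather than a mathematical one: making sure that the notion of ``one-sided variable-bias coin tossing'' as defined in \cite{CK1} really is an instance of a one-sided non-deterministic computation in which Bob has a private input, so that the Security Condition as stated here is the right notion to apply. Once that identification is made, the corollary is immediate from Theorem \ref{thm4}; no further calculation is needed, since the quantitative work (the failure of condition (\ref{con1}) for all priors $q_0$) has already been carried out in the proof of that theorem.
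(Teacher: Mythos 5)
Your proposal is correct and matches the paper's own proof: the paper likewise identifies one-sided variable-bias coin tossing as the special case of the one-sided computation with $p_{00}=p_{10}$ and $p_{01}=p_{11}$ (i.e.\ the outcome distribution depending only on Bob's input), observes that such cases are not among the exceptions of Theorem \ref{thm4}, and concludes impossibility. Your explicit remark that a nontrivial bias forces some $p_{ij}\notin\{0,1\}$ is just a spelled-out version of the paper's one-line justification.
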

\begin{proof}
A one-sided variable bias coin toss is the special case where both
$p_{00}=p_{10}$ and $p_{01}=p_{11}$.  These cases are not exceptions
of Theorem \ref{thm4}, and hence are impossible.
\end{proof}

\subsection{Example:  The Impossibility Of Oblivious Transfer}
\label{OT_imposs}
Here we show explicitly how to attack a black box that performs
oblivious transfer when used honestly.  This is a second proof of its
impossibility in a stand-alone manner (the first being Rudolph's
\cite{Rudolph}). \footnote{Impossibility had previously been argued on
  the grounds that oblivious transfer implies bit commitment and hence
  is impossible because bit commitment is.  However, while this
  argument rules out the possibility of a composable oblivious
  transfer protocol, a stand-alone one is not excluded.}  The
probability table for this task is given in Table \ref{tabOT}.
\begin{table}
\begin{center}
\begin{tabular}{cc|cc|}
\multirow{2}{*}{$p(k|i)$}&&\multicolumn{2}{|c|}{$i$}\\
&&\;0\;&\;1\;\\
\hline
\multirow{3}{*}{$k$}
&0\;&\;$\frac{1}{2}$\;&\;0\;\\
&1\;&\;0\;&\;$\frac{1}{2}$\;\\
&?\;&\;$\frac{1}{2}$\;&\;$\frac{1}{2}$\;\\
\hline
\end{tabular}
\caption{Probability table for oblivious transfer.}
\label{tabOT}
\end{center}
\end{table}

In an honest implementation of oblivious transfer, Bob is able to
guess Alice's input with probability $\frac{3}{4}$.  However, the
final states after using the ideal black box are of the form
$\ket{\psi_b}=\frac{1}{\sqrt{2}}\left(\ket{b}+\ket{?}\right)$, where
$\ket{0}$, $\ket{1}$ and $\ket{?}$ are mutually orthogonal.  These are
optimally distinguished using the {\sc POVM} $(E_0,\openone-E_0)$,
where
\begin{equation}
E_0=\frac{1}{6}\left(\begin{array}{ccc}
2+\sqrt{3}&-1&1+\sqrt{3}\\
-1&2-\sqrt{3}&1-\sqrt{3}\\
1+\sqrt{3}&1-\sqrt{3}&2\\
\end{array}\right).
\end{equation}
This {\sc POVM} allows Bob to guess Alice's bit with probability
$\frac{1}{2}\left(1+\frac{\sqrt{3}}{2}\right)$, which is significantly
greater than $\frac{3}{4}$.

\section{Discussion} 
\label{disc}
We have introduced a black box model of computation, and have given a
necessary condition for security.  Even if such black boxes were to
exist as prescribed by the model, one party can always break the
security condition.  Specifically, by inputting a superposed state
rather than a classical one, and performing an appropriate measurement
on the outcome state, one party can always gain more information on
the input of the other than that gained using any honest strategy.  In
the case of deterministic functions, this attack has only been shown
to work if the function is non-degenerate and potentially concealing.
In the case where the sole purpose of the function is to learn
something about the other party's input, these are the only relevant
functions.

Our theorems deal only with the simplest cases of each class of
function.  However, the results can be extended to more general
functions as described below.

{\bf Larger input alphabets:} A deterministic function is impossible
to compute securely if it possesses a $3\times 3$ submatrix which is
potentially concealing and satisfies the degeneracy requirement.  This
follows because Alice's prior might be such that she can reduce Bob to
three possible values of $j$.  This argument does not rule out the
possibility of all larger functions, since some exist that are
potentially concealing without possessing a potentially concealing
$3\times 3$ subfunction.  Nevertheless, we conjecture that all
potentially concealing functions have a cheating attack which involves
inputting a superposition and then optimally measuring the outcome.

In the non-deterministic case, all functions with more possibilities
for $i$ and $j$ values possess $2\times 2$ submatrices that are ruled
out by the attacks presented, or reduce to functions that are
one-input.  Therefore, no two-party non-deterministic computations
with binary outputs can satisfy our security condition.

{\bf Larger output alphabets:} In the non-deterministic case, we
considered only binary outputs.  We conjecture that the attacks we
have presented work more generally on functions with a larger range of
possible outputs.

\bigskip 
We have not proven that the aforementioned attacks work for all
functions within the classes given in Table \ref{fns}, although we
conjecture this to be the case.  Furthermore, for any given
computation, one can use the methods presented in this work to verify
its vulnerability under such attacks.

We now briefly place our results within the context of universal
security definitions.  In classical cryptography, there are two common
models for universal security, one introduced by Canetti
\cite{Canetti} and the other by Backes, Pfitzmann and Waidner
\cite{PW,BPW}.  Recently, such frameworks have been extended for use
in quantum protocols \cite{Ben-OrMayers,Unruh,CGS}.  The idea is that
if a protocol is universally secure (or universally composable), then
it can be used as a subprotocol in any larger protocol.  The large
protocol can then be divided into subprotocols, each of which is
assumed to behave as a black box with a defined ideal
functionality\footnote{Or can alternatively be described via a trusted
  third party}.  The task of proving the larger protocol secure then
reduces to that of proving that the subprotocols correctly mimick
their ideals, together with an argument that the combination of the
ideals correctly performs the overall task.

Our results imply that there is no way to define an ideal suitable for
realizing secure classical computation in a quantum relativistic
framework.  Hence, without making additional assumptions, or invoking
the presence of a trusted third party, secure classical computation is
impossible using the usual notions of security.  The quantum
relativistic world, while offering more cryptographic power than both
classical and quantum non-relativistic worlds, still does not permit a
range of computational tasks.

One reasonable form of additional assumption is that the storage power
of an adversary is bounded\comment{\footnote{In fact, it is believed
that there is a physical principle that bounds the information
capacity of a region in terms of its surface area (see
\cite{Bekenstein} for a recent review), although this bound is so
large that it is difficult to imagine it ever being of use.}}.  The
so-called bounded storage model has been used in both classical and
quantum settings.  This model evades our no-go results because
limiting the quantum storage power of an adversary forces them to make
measurements (or discard potentially useful parts of the system).
This invalidates our unitary model of computation.  In the classical
bounded storage model, the adversary's memory size can be at most
quadratic in the memory size of the honest parties in order to form
secure protocols \cite{Cachin&,Ding&}.  However, if quantum protocols
are considered, and an adversary's quantum memory is limited, a much
wider separation is possible.  Protocols exist for which the honest
participants need no quantum memory, while the adversary needs to
store half of the qubits transmitted in the protocol in order to cheat
successfully \cite{Damgard&3}.

We further remark that the cheating strategy we present for the
non-deterministic case does not work for all assignments of Alice's
prior over Bob's inputs---there exist functions and values of the
prior for which it is impossible to cheat using the attack we have
presented.  This continues to be the case when we allow Alice to
choose amongst the most general superposed input states.  As a
concrete example, consider the set
$(p_{00},p_{01},p_{10},p_{11})=(\frac{47}{150},\frac{103}{150},\frac{8}{9},\frac{5}{9})$,
with $q_0=\frac{1}{2}$ in the two sided version. Hence, in practice,
there could be situations in which Bob would be happy to perform such
a computation, for example, if he was sure Alice had no prior
information over his inputs.

\acknowledgments 

I would like to acknowledge Adrian Kent and Robert K\"onig for useful
discussions.  This work was partly supported by the European Union
through the Integrated Project QAP (IST-3-015848), SCALA (CT-015714),
and SECOQC and by the QIP IRC (GR/S821176/01).


\begin{thebibliography}{23}
\expandafter\ifx\csname natexlab\endcsname\relax\def\natexlab#1{#1}\fi
\expandafter\ifx\csname bibnamefont\endcsname\relax
  \def\bibnamefont#1{#1}\fi
\expandafter\ifx\csname bibfnamefont\endcsname\relax
  \def\bibfnamefont#1{#1}\fi
\expandafter\ifx\csname citenamefont\endcsname\relax
  \def\citenamefont#1{#1}\fi
\expandafter\ifx\csname url\endcsname\relax
  \def\url#1{\texttt{#1}}\fi
\expandafter\ifx\csname urlprefix\endcsname\relax\def\urlprefix{URL }\fi
\providecommand{\bibinfo}[2]{#2}
\providecommand{\eprint}[2][]{\url{#2}}

\bibitem[{\citenamefont{Colbeck and Kent}(2006)}]{CK1}
\bibinfo{author}{\bibfnamefont{R.}~\bibnamefont{Colbeck}} \bibnamefont{and}
  \bibinfo{author}{\bibfnamefont{A.}~\bibnamefont{Kent}},
  \bibinfo{journal}{Physical Review A} \textbf{\bibinfo{volume}{73}},
  \bibinfo{pages}{032320} (\bibinfo{year}{2006}).

\bibitem[{\citenamefont{Kent}(2005)}]{Kent_relBC}
\bibinfo{author}{\bibfnamefont{A.}~\bibnamefont{Kent}},
  \bibinfo{journal}{Journal of Cryptology} \textbf{\bibinfo{volume}{18}}
  (\bibinfo{year}{2005}).

\bibitem[{\citenamefont{Lo}(1997)}]{Lo}
\bibinfo{author}{\bibfnamefont{H.-K.} \bibnamefont{Lo}},
  \bibinfo{journal}{Physical Review A} \textbf{\bibinfo{volume}{56}},
  \bibinfo{pages}{1154} (\bibinfo{year}{1997}).

\bibitem[{\citenamefont{Kent}(2004)}]{Kent_certif}
\bibinfo{author}{\bibfnamefont{A.}~\bibnamefont{Kent}},
  \emph{\bibinfo{title}{Promising the impossible: Classical certification in a
  quantum world}}, \bibinfo{howpublished}{e-print quant-ph/0409029}
  (\bibinfo{year}{2004}).

\bibitem[{\citenamefont{Gottesman and Lo}(2000)}]{GottesmanLo}
\bibinfo{author}{\bibfnamefont{D.}~\bibnamefont{Gottesman}} \bibnamefont{and}
  \bibinfo{author}{\bibfnamefont{H.-K.} \bibnamefont{Lo}},
  \bibinfo{journal}{Physics Today} \textbf{\bibinfo{volume}{53}}
  (\bibinfo{year}{2000}).

\bibitem[{\citenamefont{Holevo}(1973)}]{Holevo}
\bibinfo{author}{\bibfnamefont{A.~S.} \bibnamefont{Holevo}},
  \bibinfo{journal}{Journal of Multivariate Analysis}
  \textbf{\bibinfo{volume}{3}}, \bibinfo{pages}{337} (\bibinfo{year}{1973}).

\bibitem[{\citenamefont{Yuen et~al.}(1975)\citenamefont{Yuen, Kennedy, and
  Lax}}]{Yuen&}
\bibinfo{author}{\bibfnamefont{H.~P.} \bibnamefont{Yuen}},
  \bibinfo{author}{\bibfnamefont{R.~S.} \bibnamefont{Kennedy}},
  \bibnamefont{and} \bibinfo{author}{\bibfnamefont{M.}~\bibnamefont{Lax}},
  \bibinfo{journal}{IEEE Transactions on Information Theory}
  \textbf{\bibinfo{volume}{IT-21}}, \bibinfo{pages}{125}
  (\bibinfo{year}{1975}).

\bibitem[{\citenamefont{Helstrom}(1976)}]{Helstrom}
\bibinfo{author}{\bibfnamefont{C.~W.} \bibnamefont{Helstrom}},
  \emph{\bibinfo{title}{Quantum Detection and Estimation Theory}}
  (\bibinfo{publisher}{Academic Press}, \bibinfo{address}{London},
  \bibinfo{year}{1976}).

\bibitem[{\citenamefont{Chefles}(2000)}]{Chefles}
\bibinfo{author}{\bibfnamefont{A.}~\bibnamefont{Chefles}},
  \bibinfo{journal}{Contemporary Physics} \textbf{\bibinfo{volume}{41}},
  \bibinfo{pages}{401} (\bibinfo{year}{2000}).

\bibitem[{\citenamefont{Je\v{z}ek et~al.}(2002)\citenamefont{Je\v{z}ek,
  \v{R}eh\'{a}\v{c}ek, and Fiur\'{a}\v{s}ek}}]{JRF}
\bibinfo{author}{\bibfnamefont{M.}~\bibnamefont{Je\v{z}ek}},
  \bibinfo{author}{\bibfnamefont{J.}~\bibnamefont{\v{R}eh\'{a}\v{c}ek}},
  \bibnamefont{and}
  \bibinfo{author}{\bibfnamefont{J.}~\bibnamefont{Fiur\'{a}\v{s}ek}},
  \emph{\bibinfo{title}{Finding optimal strategies for minimum-error
  quantum-state discrimination}}, \bibinfo{howpublished}{e-print
  quant-ph/0201109} (\bibinfo{year}{2002}).

\bibitem[{\citenamefont{Hughston et~al.}(1993)\citenamefont{Hughston, Jozsa,
  and Wootters}}]{HJW}
\bibinfo{author}{\bibfnamefont{L.~P.} \bibnamefont{Hughston}},
  \bibinfo{author}{\bibfnamefont{R.}~\bibnamefont{Jozsa}}, \bibnamefont{and}
  \bibinfo{author}{\bibfnamefont{W.~K.} \bibnamefont{Wootters}},
  \bibinfo{journal}{Physics Letters A} \textbf{\bibinfo{volume}{183}},
  \bibinfo{pages}{14} (\bibinfo{year}{1993}).

\bibitem[{\citenamefont{Hausladen and Wootters}(1994)}]{HW}
\bibinfo{author}{\bibfnamefont{P.}~\bibnamefont{Hausladen}} \bibnamefont{and}
  \bibinfo{author}{\bibfnamefont{W.~K.} \bibnamefont{Wootters}},
  \bibinfo{journal}{Journal of Modern Optics} \textbf{\bibinfo{volume}{41}},
  \bibinfo{pages}{2385} (\bibinfo{year}{1994}).

\bibitem[{mat()}]{mathematica_script}
\bibinfo{note}{Mathematica script available at
  http://qubit.damtp.cam.ac.uk/users/roger/script/PGP.nb}.

\bibitem[{\citenamefont{Rudolph}(2002)}]{Rudolph}
\bibinfo{author}{\bibfnamefont{T.}~\bibnamefont{Rudolph}},
  \emph{\bibinfo{title}{The laws of physics and cryptographic security}},
  \bibinfo{howpublished}{e-print quant-ph/0202143} (\bibinfo{year}{2002}).

\bibitem[{\citenamefont{Canetti}(2000)}]{Canetti}
\bibinfo{author}{\bibfnamefont{R.}~\bibnamefont{Canetti}},
  \bibinfo{journal}{Journal of Cryptology} \textbf{\bibinfo{volume}{13}},
  \bibinfo{pages}{143} (\bibinfo{year}{2000}).

\bibitem[{\citenamefont{Pfitzmann and Waidner}(2001)}]{PW}
\bibinfo{author}{\bibfnamefont{B.}~\bibnamefont{Pfitzmann}} \bibnamefont{and}
  \bibinfo{author}{\bibfnamefont{M.}~\bibnamefont{Waidner}}, in
  \emph{\bibinfo{booktitle}{Proceedings of the 2001 IEEE Symposium on Security
  and Privacy (SP01)}} (\bibinfo{publisher}{IEEE Computer Society},
  \bibinfo{address}{Washington, DC, USA}, \bibinfo{year}{2001}), pp.
  \bibinfo{pages}{184--201}.

\bibitem[{\citenamefont{Backes et~al.}(2004)\citenamefont{Backes, Pfitzmann,
  and Waidner}}]{BPW}
\bibinfo{author}{\bibfnamefont{M.}~\bibnamefont{Backes}},
  \bibinfo{author}{\bibfnamefont{B.}~\bibnamefont{Pfitzmann}},
  \bibnamefont{and} \bibinfo{author}{\bibfnamefont{M.}~\bibnamefont{Waidner}},
  in \emph{\bibinfo{booktitle}{TCC}} (\bibinfo{year}{2004}), pp.
  \bibinfo{pages}{336--354}.

\bibitem[{\citenamefont{Ben-Or and Mayers}(2004)}]{Ben-OrMayers}
\bibinfo{author}{\bibfnamefont{M.}~\bibnamefont{Ben-Or}} \bibnamefont{and}
  \bibinfo{author}{\bibfnamefont{D.}~\bibnamefont{Mayers}},
  \emph{\bibinfo{title}{General security definition and composability for
  quantum \& classical protocols}}, \bibinfo{howpublished}{e-print
  quant-ph/0409062} (\bibinfo{year}{2004}).

\bibitem[{\citenamefont{Unruh}(2004)}]{Unruh}
\bibinfo{author}{\bibfnamefont{D.}~\bibnamefont{Unruh}},
  \emph{\bibinfo{title}{Simulatable security for quantum protocols}},
  \bibinfo{howpublished}{e-print quant-ph/0409125} (\bibinfo{year}{2004}).

\bibitem[{\citenamefont{Cr\'epeau et~al.}(2002)\citenamefont{Cr\'epeau,
  Gottesman, and Smith}}]{CGS}
\bibinfo{author}{\bibfnamefont{C.}~\bibnamefont{Cr\'epeau}},
  \bibinfo{author}{\bibfnamefont{D.}~\bibnamefont{Gottesman}},
  \bibnamefont{and} \bibinfo{author}{\bibfnamefont{A.}~\bibnamefont{Smith}}, in
  \emph{\bibinfo{booktitle}{Proceedings of the 34th annual ACM symposium on
  Theory of computing (STOC-02)}} (\bibinfo{publisher}{ACM Press},
  \bibinfo{address}{New York, NY, USA}, \bibinfo{year}{2002}), pp.
  \bibinfo{pages}{643--652}.

\bibitem[{\citenamefont{Cachin et~al.}(1998)\citenamefont{Cachin, Cr\'epeau,
  and Marcil}}]{Cachin&}
\bibinfo{author}{\bibfnamefont{C.}~\bibnamefont{Cachin}},
  \bibinfo{author}{\bibfnamefont{C.}~\bibnamefont{Cr\'epeau}},
  \bibnamefont{and} \bibinfo{author}{\bibfnamefont{J.}~\bibnamefont{Marcil}},
  in \emph{\bibinfo{booktitle}{FOCS '98: Proceedings of the 39th Annual
  Symposium on Foundations of Computer Science}} (\bibinfo{publisher}{IEEE
  Computer Society}, \bibinfo{address}{Washington, DC, USA},
  \bibinfo{year}{1998}), pp. \bibinfo{pages}{493--502}.

\bibitem[{\citenamefont{Ding et~al.}(2004)\citenamefont{Ding, Harnik, Rosen,
  and Shaltiel}}]{Ding&}
\bibinfo{author}{\bibfnamefont{Y.~Z.} \bibnamefont{Ding}},
  \bibinfo{author}{\bibfnamefont{D.}~\bibnamefont{Harnik}},
  \bibinfo{author}{\bibfnamefont{A.}~\bibnamefont{Rosen}}, \bibnamefont{and}
  \bibinfo{author}{\bibfnamefont{R.}~\bibnamefont{Shaltiel}}, in
  \emph{\bibinfo{booktitle}{Proceedings of the First Theory of Cryptography
  Conference (TCC04)}}, edited by
  \bibinfo{editor}{\bibfnamefont{M.}~\bibnamefont{Naor}}
  (\bibinfo{publisher}{Springer}, \bibinfo{year}{2004}), vol.
  \bibinfo{volume}{2951} of \emph{\bibinfo{series}{Lecture Notes in Computer
  Science}}, pp. \bibinfo{pages}{446--472}.

\bibitem[{\citenamefont{Damg{\aa}rd et~al.}(2005)\citenamefont{Damg{\aa}rd,
  Fehr, Salvail, and Schaffner}}]{Damgard&3}
\bibinfo{author}{\bibfnamefont{I.}~\bibnamefont{Damg{\aa}rd}},
  \bibinfo{author}{\bibfnamefont{S.}~\bibnamefont{Fehr}},
  \bibinfo{author}{\bibfnamefont{L.}~\bibnamefont{Salvail}}, \bibnamefont{and}
  \bibinfo{author}{\bibfnamefont{C.}~\bibnamefont{Schaffner}}, in
  \emph{\bibinfo{booktitle}{FOCS '05: Proceedings of the 46th Annual IEEE
  Symposium on Foundations of Computer Science}} (\bibinfo{publisher}{IEEE
  Computer Society}, \bibinfo{address}{Washington, DC, USA},
  \bibinfo{year}{2005}), pp. \bibinfo{pages}{449--458}.

\end{thebibliography}
\end{document}